\newcommand{\abs}[1]{\left| #1 \right|}
\newcommand{\okra}[1]{\left( #1 \right)}
\newcommand{\klam}[1]{\left\{ #1 \right\}}
\newcommand{\boole}[1]{{\mathbf 1}{\klam{#1}}}
\newcommand{\ceil}[1]{\left\lceil #1 \right\rceil}
\newcommand{\floor}[1]{\left\lfloor #1 \right\rfloor}
\DeclareMathOperator{\card}{\#}
\DeclareMathOperator{\corr}{Corr}
\DeclareMathOperator{\mean}{\mathbf{E}}
\newtheorem{definition}{Definition}
\newtheorem{example}{Example}
\newtheorem{lemma}{Lemma}
\newtheorem{theorem}{Theorem}
\begin{document}

\title{Multiperiodic Processes: \\
  Ergodic Sources with a Sublinear Entropy}

\author{{\L}ukasz D\k{e}bowski%
  \thanks{{\L}ukasz D\k{e}bowski is with the Institute of Computer
    Science, Polish Academy of Sciences, ul. Jana Kazimierza 5, 01-248
    Warszawa, Poland, e-mail: ldebowsk@ipipan.waw.pl.}%
}

\date{}

\begin{titlepage}
\maketitle

\begin{abstract}
  Several explicit stochastic processes are known to satisfy Hilberg's
  law, a power-law growth of block entropy conjectured for natural
  language and recently connected to the neural scaling law. Existing
  examples either possess a positive Shannon entropy rate, are
  non-ergodic, or require comparatively involved constructions. We
  introduce multiperiodic processes, a new class of stationary ergodic
  processes over the natural numbers generated by random shifts of
  deterministic multiperiodic sequences. Under mild conditions,
  multiperiodic processes have vanishing Shannon entropy rate and,
  under a suitable parameterization, they satisfy both Zipf's law for
  symbol frequencies and Hilberg's law for block entropy. Since
  multiperiodic processes are not mixing, we identify the open problem
  of constructing an elementary strongly mixing source with vanishing
  entropy rate and Hilberg's law.
  \\[1em]
  \textbf{Key words}: ergodic processes, periodic sequences, entropy
  rate, Zipf's law, Hilberg's law
  \\[1em]
  \textbf{MSC 2020:} 60G10, 62M20, 94A17
\end{abstract}


\end{titlepage}

\section{Introduction}
\label{secIntroduction}

Power-law growth of block entropy, commonly referred to as Hilberg's
law \cite{Hilberg90}, has long been conjectured to characterize
natural language and related symbolic data
\cite{EbelingNicolis91,CrutchfieldFeldman03,Debowski11b,TakahiraOther16}.
Recently, it was shown in \cite{Debowski25g,CagnettaOther26} that
Hilberg's law implies the neural scaling law, namely a power-law decay
of the cross entropy rate of a large language model as a function of
the amount of training resources. The neural scaling law has been
observed empirically for internet-scale text corpora
\cite{HestnessOther17, KaplanOther20, HenighanOther2020,
  HernandezOther21, HoffmannOther22, PearceSong24, LiOther25}.  These
developments motivate the search for explicit stochastic sources that
exhibit Hilberg-type entropy growth and can serve as analytically
tractable reference models in machine learning and quantitative
linguistics.

The notations to state our results are as follows. Blocks of random
variables are denoted by $X_j^k:=(X_j,X_{j+1},\ldots,X_k)$. Let
$H(X):=\mean\okra{-\log P(X)}$ be the Shannon entropy of a discrete
random variable $X$, where $\log$ denotes the base 2 logarithm. Let
$H(X|Y):=H(X,Y)-H(Y)$ be the conditional entropy and
$I(X;Y):=H(X)-H(X|Y)$ the mutual information. We write
\begin{align}
f(t)\sim g(t)
\quad\text{iff}\quad
\lim_{t\to\infty}\frac{\log f(t)}{\log g(t)}=1.
\end{align}
For a particular stationary process $(X_t)_{t\in\mathbb{Z}}$,
Hilberg's law is the relationship
\begin{align}
  \label{Hilberg}
  H(X_1^t)-ht\sim t^\beta,
\end{align}
where $h$ is the entropy rate \cite{Shannon48} and $\beta\in(0,1)$ is
the Hilberg exponent \citep{Hilberg90}.

Several explicit examples that obey law (\ref{Hilberg}) are already
known. In particular, the Santa Fe and Oracle processes
\cite{Debowski09,Debowski11b,Hutter21,Debowski21b}, to be recalled in
Section \ref{secHilberg}, provide simple examples of stationary
sources over a countable alphabet whose block entropy obeys a power
law with a positive entropy rate. However, as we will explain further
in Section \ref{secWhy}, it is not completely certain whether
condition $h>0$ holds for natural language.  To make an informed
opinion about this issue, we feel motivated to better understand the
mathematical construction of zero entropy rate processes with a
power-law block entropy. There are random hierarchical association
(RHA) processes \cite{Debowski17,Debowski21} which follow law
(\ref{Hilberg}) with $h=0$ but their construction is ridiculously
involved and their ergodic properties remain unclear. Definitely for
machine learning and quantitative linguistic experiments, we would
like to have a simpler kind of a baseline source with similar
properties.

The present paper is a step towards filling this gap. We introduce a
new elementary class of stationary ergodic sources, called
multiperiodic processes.  Multiperiodic processes
$(K_t)_{t\in\mathbb{Z}}$ are asymptotically deterministic sequences of
random natural numbers. They are ergodic but not mixing and, under an
appropriate parameterization, they satisfy Hilberg's law with
vanishing Shannon entropy rate,
\begin{align}
\label{HilbergZero}
H(K_1^t)\sim t^\beta.
\end{align}

This construction is considerably simpler the RHA processes
\cite{Debowski17,Debowski21}. The multiperiodic process
$(K_t)_{t\in\mathbb{Z}}$ is supported on a family of deterministic
multiperiodic sequences generated by a mechanism called the Infinite
Clock. Randomness in this process enters only through independent
random shifts. Concretely, the Infinite Clock algorithm, given certain
parameters $\pi_k\in\mathbb{N}$, called periods, and
$\sigma_k=1,2\ldots,\pi_k$, called seeds, returns a deterministic
infinite multiperiodic sequence $(k_t)_{t\in\mathbb{Z}}$. To obtain a
process $(K_t)_{t\in\mathbb{Z}}$, we introduce a sequence of
independent random seeds $(\Sigma_k)_{k\in\mathbb{N}}$ with the
uniform distributions $P(\Sigma_k=\sigma_k)=1/\pi_k$. Consequently, we
fix the random sequence
$(K_t)_{t\in\mathbb{Z}}=(k_t)_{t\in\mathbb{Z}}$ for event
$(\Sigma_k)_{k\in\mathbb{N}}=(\sigma_k)_{k\in\mathbb{N}}$.

Power laws in this asymptotically deterministic system arise under a
suitable choice of the period parameters. It suffices to take
$\pi_k=1+\ceil{ck}$ for a $c>0$, to obtain the relative frequencies
approaching Zipf's law
\begin{align}
\label{Zipf}
P(K_t=k)\sim k^{-\alpha},
\end{align}
where $\alpha=(c+1)/c>1$, cf.\ \cite{Zipf35,Mandelbrot54}. Although
the process is far from being a sequence of independent random
variables, it inherits a power-law growth of the vocabulary observed
for memoryless sources with the marginal law (\ref{Zipf}). Defining
the set of observed types as
\begin{align}
\label{TypesIntro}
  \mathcal{V}_t:=\klam{K_1,K_2,\ldots,K_t},
\end{align}
we obtain a power-law relationship
\begin{align}
  \label{Heaps}
  \mean \card\mathcal{V}_t\sim t^{1/\alpha},
\end{align}
known as Heaps' law in quantitative linguistics
\cite{Karlin67,Khmaladze88,Herdan64,Heaps78,Baayen01}.\footnote{We
  adopt the distinction of terms ``type'' and ``token'' borrowed from
  quantitative linguistics: A token is an element of a sequence with
  possible repetitions, whereas a type is an element of a set, which
  excludes repetitions.}

The key observations to derive entropic and ergodic properties of
multiperiodic processes are as follows. First, the block entropy
equals the entropy of the seeds associated with the types that appear
in a sample
\begin{align}
  \label{SeedsEntropy}
  H(K_1^t)=H(\klam{(k,\Sigma_k):k\in\mathcal{V}_t})
  \sim \mean \card\mathcal{V}_t \sim t^{1/\alpha},
\end{align}
because there is a remarkable one-to-one correspondence
$K_1^t\leftrightarrow \klam{(k,\Sigma_k):k\in\mathcal{V}_t}$. Hence
Hilberg's law holds with exponent $\beta=1/\alpha$ and $h=0$.

Second, unlike the latent variables in previously studied Santa Fe
processes \cite{Debowski09,Debowski11b,Hutter21,Debowski21b}, for
which we observe a formula similar to (\ref{SeedsEntropy}), the seeds
$(\Sigma_k)_{k\in\mathbb{N}}$ are not measurable with respect to the
shift-invariant algebra of $(K_t)_{t\in\mathbb{Z}}$. This distinction
allows multiperiodic processes to remain ergodic despite possessing an
infinite collection of independent seed variables.  These processes
are not mixing, however, since strictly periodic sequences with
randomized shifts are not mixing, either. --- This can be considered a
drawback of our construction as a model of real-life phenomena and
therefore examples relaxing this condition should be sought for.

Still, multiperiodic processes provide explicit examples of stationary
ergodic sources with vanishing entropy rate, power-law block entropy
growth, and long-range statistical structure. Together with Santa Fe
and related constructions, they furnish a family of analytically
tractable benchmark models whose entropy rate and Hilberg exponent can
be controlled \cite{Debowski15d}. Such elementary models may be useful
for studying the performance of entropy estimators, universal coding
schemes, and learning algorithms on data exhibiting long-range
dependencies. They also provide a mathematically simple setting for
investigating the relationship between entropy growth, predictability,
and power laws such as Zipf's law and the neural scaling law.

\paragraph{Organization of the article.}

In Section \ref{secContext}, we describe the research context. Section
\ref{secMultiperiodic} contains the results concerning multiperiodic
processes.  Section \ref{secConclusion} forms the conclusion.

\section{Research context}
\label{secContext}

The present paper belongs to a line of research that studies
stationary stochastic processes with long-range statistical
dependencies. A natural information-theoretic measure of such
dependencies is excess entropy, which equals the mutual information
between the past and future of a process.  Processes with infinite
excess entropy arise in several mathematical settings, ranging from
Gaussian models with slowly decaying correlations to symbolic
processes motivated by natural language and other complex data
sources. Among these examples, particular attention has been devoted
to processes satisfying Hilberg's law, a power-law growth of block
entropy that has been conjectured to characterize natural language and
was recently linked to the neural scaling law.  The purpose of this
section is to place multiperiodic processes within this broader
landscape.

\subsection{Excess entropy}
\label{secExcess}

For a stationary process $(X_t)_{t\in\mathbb{Z}}$, let us consider two
limits
\begin{align}
  h&:=\lim_{t\to\infty} \frac{H(X_1^t)}{t}
  =\lim_{t\to\infty} H(X_1|X_{-t}^0),
  \\
  E&:=\lim_{t\to\infty} \okra{H(X_1^t)-ht}
  =\lim_{t\to\infty} I(X_1^t;X_{-t}^0).
\end{align}
Limit $h$ is called the entropy rate \cite{Shannon48, CoverThomas06},
whereas $E$ is called the excess entropy \cite{CrutchfieldFeldman03}
or the predictive information \cite{BialekNemenmanTishby01,
  BialekNemenmanTishby01b, FutrellHahn25}.  While entropy rate is a
measure of process unpredictability, excess entropy is an intuitive
measure of process memory. Although excess entropy only summarizes the
memory capacity, without specifying how exactly the process future
depends on the past, it can be given interesting interpretations. Some
attempts were also made to generalize excess entropy to
two-dimensional random fields
\cite{FeldmanCrutchfield03,BulatekKaminski09}.

\paragraph{Ergodic decomposition.}

A central question is whether the excess entropy is finite or
infinite. The distinction has appeared in several areas of probability
theory, information theory, and statistical physics.  The basic
observation is that condition $E=\infty$ holds if the process is
strongly non-ergodic in the sense that its shift-invariant algebra
$\mathcal{I}$ is non-atomic. It is so because
\begin{align}
  \label{ExcessDecomposition}
  E=H(\mathcal{I})+I(X_{-\infty}^0;X_1^\infty|\mathcal{I})
\end{align}
and $H(\mathcal{A})=\infty$ for a non-atomic algebra $\mathcal{A}$
under an appropriate generalization of Shannon information measures to
arbitrary algebras of events
\cite{GelfandKolmogorovYaglom56en,Dobrushin59en,Pinsker60en,
  Wyner78,Debowski09,Debowski20}. However, infinite excess entropy can
arise also for ergodic and mixing sources.

\paragraph{Gaussian processes.}

The earliest systematic studies of excess entropy appear to concern
Gaussian processes.  Grenander and Szeg\H{o} \cite[Section
5.5]{GrenanderSzego84} presented an integral formula for excess
entropy (in disguise). Finch \cite{Finch60} evaluated this formula for
autoregressive moving average (ARMA) processes, which turned out to
yield $E<\infty$. More generally, we have $E<\infty$ if and only if
$\sum_{k=1}^\infty k\alpha_k^2<\infty$ and $\abs{\alpha_k}<1$ for the
partial autocorrelation $\alpha_k$ \cite{Debowski05en}. In particular,
under a positive and continuous spectral density, Li \cite{Li06}
showed that $E<\infty$ if and only if the autocorrelation
$\rho_k:=\corr(X_0;X_k)$ satisfies
$\sum_{k=1}^\infty k\rho_k^2<\infty$. Thus, for Gaussian processes,
infinite excess entropy is closely related to slowly decaying
correlations and more standard notions of long-range dependence
\cite{Beran94}.

\paragraph{Discrete examples.}

For discrete-valued processes, the situation is richer. By the
data-processing inequality, hidden Markov models (HMM) with finitely
many latent states necessarily satisfy $E<\infty$
\cite{CrutchfieldFeldman03}. Excess entropy can also be expressed in
terms of predictive and retrodictive $\epsilon$-machines, which are
minimal unifilar hidden Markov representations of the process
\cite{ShaliziCrutchfield01,Lohr09,EllisonMahoneyCrutchfield09,
  MahoneyEllisonCrutchfield09}.  Infinite excess entropy first
appeared in more exotic examples.  Bradley \cite{Bradley80}
constructed a mixing process with $E=\infty$, while Gramss
\cite{Gramss94} studied a process generated from frequencies in the
binary rabbit sequence. Later, Travers and Crutchfield
\cite{TraversCrutchfield11} and D\k{e}bowski \cite{Debowski14}
constructed hidden Markov models with countably many latent states
that satisfy $E=\infty$. These examples also obey the power-law growth
of block entropy (\ref{Hilberg}) to be discussed in the next
subsection.

\subsection{Hilberg's law}
\label{secHilberg}

A more systematic study of infinite excess entropy for the discrete
alphabet case was inspired by the publication by Hilberg
\cite{Hilberg90} who redrew the plot by Shannon \cite{Shannon51} in
the log-log scale and proposed that relationship (\ref{Hilberg}) with
$\beta\approx 1/2$ and $h\approx 0$ holds for natural language.  The
first reception of this power law came from physicists
\cite{EbelingNicolis91,EbelingPoschel94,
  BialekNemenmanTishby01b,CrutchfieldFeldman03,Lohr09}.  By contrast,
we will show that Hilberg's law (\ref{Hilberg}) can be easily related
to general quantitative linguistic observations, such as Zipf's law
(\ref{Zipf}).

\paragraph{Santa Fe processes.}

The ergodic decomposition of excess entropy
(\ref{ExcessDecomposition}) inspires a family of linguistically
motivated examples of processes that obey law (\ref{Hilberg}). These
examples are called Santa Fe processes
\cite{Debowski05en,Debowski09,Debowski11b} and were independently
studied by Hutter \cite{Hutter21} as a model that yields the neural
scaling law. The construction of the Santa Fe process
$(X_t)_{t\in\mathbb{Z}}$ is to decompose each token $X_t$ as a pair of
a natural number $K_t$ and an additional bit --- which is copied from a
certain binary sequence $(Z_k)_{k\in\mathbb{N}}$ by taking the item at
position $K_t$. In other words, each text token $X_t$ may be written
as a pair
\begin{align}
  \label{SantaFe}
  X_t=(K_t,Z_{K_t}),
\end{align}
where $(Z_k)_{k\in\mathbb{N}}$, called knowledge, is the sequence of
random bits and $(K_t)_{t\in\mathbb{Z}}$, called narration, is the
sequence of random natural numbers.  In the default example, narration
$(K_t)_{t\in\mathbb{Z}}$ is a memoryless source with the marginal Zipf
distribution (\ref{Zipf}), whereas knowledge $(Z_k)_{k\in\mathbb{N}}$
is a sequence of independent fair coin flips, independent of process
$(K_t)_{t\in\mathbb{Z}}$. Under these conditions, we obtain Hilberg's
law (\ref{Hilberg}) with $\beta=1/\alpha$ because
\begin{align}
  H(X_1^t)-ht
  =H(\klam{(k,Z_k):k\in\mathcal{V}_t}|K_1^t)
  =\mean \card\mathcal{V}_t
  \sim t^{1/\alpha}
  \label{FactsEntropy}
\end{align}
for the set of types (\ref{TypesIntro}).  This process
$(X_t)_{t\in\mathbb{Z}}$ is non-ergodic and takes values in a
countably infinite alphabet. The reason for its non-ergodicity is that
coin flips $(Z_k)_{k\in\mathbb{N}}$ are measurable with respect to the
shift-invariant algebra of process $(X_t)_{t\in\mathbb{Z}}$. Thus each
realization of sequence $(Z_k)_{k\in\mathbb{N}}$ defines a distinct
ergodic component \cite{Debowski09,Debowski21}.

\paragraph{Oracle processes.}

The condition of infinite alphabet is not necessary to obtain
Hilberg's law.  Coding natural numbers as binary strings, we can
obtain Santa Fe-like processes over a finite alphabet that obey law
(\ref{Hilberg}) \cite{Debowski11b,Debowski10}. A particularly simple
instance are the Oracle processes described in \cite{Debowski21b},
which apply the monkey-typing explanation of Zipf's law
\cite{Mandelbrot54, Miller57}. Additionally, the ergodic components of
Oracle processes are unifilar hidden Markov models with a countably
infinite number of hidden states.

\paragraph{Mixing processes.}

Replacing individual coin flips $Z_k$ with slowly evolving
time-\-homo\-geneous binary Markov chains $(Z_{kt})_{t\in\mathbb{Z}}$ and
putting
\begin{align}
  \label{MixingSantaFe}
  X_t=(K_t,Z_{K_t,t}),
\end{align}
we obtain a process which is strongly mixing, and hence ergodic
\cite{Debowski12, Bradley05}. If the transition probabilities in
Markov chains $(Z_{kt})_{t\in\mathbb{Z}}$ are sufficiently small then,
after a lengthy calculation, we still recover Hilberg's law
(\ref{Hilberg}) for Zipf's law (\ref{Zipf}) \cite{Debowski12}. Here we
notice a time-\-inhomo\-geneous modification of this process which is
considerably simpler to analyze. Namely, it suffices to assume that
bits $Z_{k,t}$ flip randomly only for $t$ such that $K_t=k$ and do it
with a fixed probability $p\in(0,1)$. Such a process
$(X_t)_{t\in\mathbb{Z}}$ is still mixing but we obtain
\begin{align}
  H(X_1^t)-ht
  =I(p)\mean \card\mathcal{V}_t\sim t^{1/\alpha},
\end{align}
where $I(p):=1+p\log p+(1-p)\log(1-p)$ is the mutual information
between bits $Z_{k,t}$ observed at two consecutive times $t$ such that
$K_t=k$. This process exhibits an infinite but lossy memory and is
also mixing.

\paragraph{RHA processes.}

All the above examples enjoy Hilberg's law (\ref{Hilberg}) with a
positive entropy rate $h>0$, which is a function of parameter
$\alpha>1$. To complete the collection of invented sources, we have
been interested in finding an example of a process with Hilberg's law
(\ref{Hilberg}) and the vanishing Shannon entropy rate, $h=0$. In
\cite{Debowski17,Debowski21}, we have constructed the random
hierarchical association processes (RHA), which satisfy
(\ref{Hilberg}) with $h=0$ but are probably non-ergodic. The
construction of RHA processes is so complicated that we do not
reproduce it here. In the end, we desired a simpler construction.

\subsection{Why do such examples matter?}
\label{secWhy}


The construction of explicit stochastic processes exhibiting Hilberg's
law is motivated by several questions arising in information theory,
linguistics, and machine learning. Such processes provide analytically
tractable reference models for studying the interplay between entropy
rate, long-range statistical dependencies, and vocabulary growth.
Although highly idealized, they can serve both as thought experiments
and as benchmark data sources for numerical investigations. Their
potential applications range from testing entropy estimators,
universal compression algorithms, and language models to exploring
connections among Hilberg's law, Zipf's law, Heaps' law, and the neural
scaling law.

The motivations discussed below are not equally established. Some are
supported by empirical evidence, whereas others remain speculative.
Nevertheless, together they illustrate why mathematically simple
examples of processes with power-law entropy growth may be worth
studying.

\paragraph{Uncertainty about the entropy rate.}

The hypothesis that the entropy rate of natural language may vanish
does not lie in the mainstream of language studies but it does not
constitute a new concern.  Although researchers generally suppose that
the entropy rate of natural language is positive
\cite{TakahiraOther16, HoffmannOther22, PearceSong24, LiOther25},
Hilberg \cite{Hilberg90} put forward that law (\ref{Hilberg}) holds
for natural language with $h=0$.  There are also more recent papers
suggesting $h=0$.  These involve certain large-scale estimates of the
cross entropy of language models \cite{KaplanOther20} and the
empirical cube-logarithmic growth of the maximal repetition length
\cite{Debowski15f}.

Claims of $h=0$ may be spurious for the following reasons.  The
original hypothesis by Hilberg was based on a meager evidence, namely,
Shannon's guessing estimates of conditional entropy for $t\le 100$
letters redrawn in a log-log plot \cite{Shannon51}.  As for the
entropy rate estimate by Kaplan et al.\ \cite{KaplanOther20}, an
important issue is that fitting function $f(x)=ax^b+c$ to empirical
data can be numerically unstable.  Moreover, it has been proved
formally that the cube-logarithmic growth of the maximal repetition
length claimed in \cite{Debowski15f} implies the vanishing conditional
R\'enyi entropy rate rather than the vanishing Shannon entropy rate
\cite{Debowski18b, Debowski25e}.

To reduce potential doubt and to make an informed
opinion about the value of the Shannon entropy rate for natural
language, we wish to better understand the mathematical construction
of zero entropy rate processes with Hilberg's law.

\paragraph{Neural scaling law.}

The empirical neural scaling law describes a power-law decay of the
cross entropy rate of a large language model as a function of the
amount of training resources. It is observed for trillion-token
corpora of texts \cite{HestnessOther17, KaplanOther20,
  HenighanOther2020, HernandezOther21, HoffmannOther22, PearceSong24,
  LiOther25}. There is a growing body of literature in machine
learning that seeks to explain the neural scaling law. Some of these
works involve sophisticated mathematical frameworks, including
applications of random matrix theory and techniques from theoretical
physics such as Feynman diagrams \cite{MaloneyRobertsSully22,
  RobertsOther22}.

However, it has been also recognized \cite{Hutter21,
  MaloneyRobertsSully22, MichaudOther23, CabannesDohmatobBietti24,
  NeumannGros25, PanOther25} that the neural scaling law may arise as
a consequence of Zipf's law or similar distributional properties of
natural language. In particular, it was shown in
\cite{Debowski25g,CagnettaOther26} that Hilberg's law implies the
neural scaling law. These developments motivate the search for simple
stochastic sources that exhibit Hilberg's law and can serve as
reference models.

\paragraph{Semantic interpretation.}

The Santa Fe decomposition (\ref{SantaFe}) admits an appealing
semantic interpretation, discussed in
\cite{Debowski11b,Debowski21}. The pairs $(K_t,Z_{K_t})$ may be viewed
as elementary statements describing an underlying binary sequence
$(Z_k)_{k\in\mathbb{N}}$. These statements are mutually consistent:
whenever two tokens refer to the same index $k$, they assert the same
value $Z_k$. In this sense, the sequence $(Z_k)_{k\in\mathbb{N}}$ may
be interpreted as a repository of knowledge, whereas narration
$(K_t)_{t\in\mathbb{Z}}$ describes the process of selecting and
communicating individual pieces of that knowledge.  Finite texts then
reveal only a small fraction of an unbounded body of persistent
information.

This interpretation suggests connections with algorithmic information
theory \cite{LiVitanyi08}, see also \cite[Chapter 1]{Debowski21}. The
random quantity
\begin{align}
  Y:=\sum_{k=1}^{\infty}2^{-k}Z_k
\end{align}
encodes the entire sequence $(Z_k)_{k\in\mathbb{N}}$ into a single
real number in range $(0,1)$. In this respect, it resembles Chaitin's
halting probability $\Omega$ \cite{Chaitin05,Chaitin13}, whose binary
expansion compactly encodes a large amount of mathematical
information.  The analogy should not be taken too literally.  Whereas
$Y$ is a random variable, $\Omega$ is a fixed algorithmically random
number \cite{MartinLof66}. Moreover, the information encoded by
$\Omega$ appears fundamentally inaccessible in a way that differs from
the probabilistic knowledge represented by $(Z_k)_{k\in\mathbb{N}}$
\cite{Barmpalias18}. Nevertheless, both examples illustrate how a
large body of information may be represented through an apparently
simple underlying object.

\paragraph{Question of mixing.}

The semantic interpretation becomes more nuanced for the mixing Santa
Fe decomposition (\ref{MixingSantaFe}). In that model, the bits
$(Z_{kt})_{t\in\mathbb{Z}}$ evolve over time and therefore represent
changing rather than immutable knowledge. One may think of them as
describing conventions, beliefs, or facts about a changing
environment. Such evolving knowledge may be a more realistic model of
human communication than the immutable knowledge represented by
(\ref{SantaFe}).  From a mathematical perspective, the distinction
between (\ref{SantaFe}) and (\ref{MixingSantaFe}) is significant.
Allowing variables $(Z_{kt})_{t\in\mathbb{Z}}$ to change in a specific
way produces strongly mixing processes while preserving Hilberg's law
for suitable choices of the transition probabilities. If the rates of
change are chosen differently, however, the excess entropy may become
finite.  Thus the persistence of information and the mixing properties
of the process are closely related.

A similar interpretation can be formulated for the multiperiodic
processes introduced in this paper. In view of formula
(\ref{SeedsEntropy}), which parallels (\ref{FactsEntropy}), the seed
variables $(\Sigma_k)_{k\in\mathbb{N}}$ may again be viewed as a form
of latent immutable knowledge. Unlike the bits
$(Z_k)_{k\in\mathbb{N}}$ in the original Santa Fe construction,
however, the seeds are not measurable with respect to the
shift-invariant algebra of the process. Consequently, the distinction
between immutable knowledge and shift-invariant information becomes
unexpectedly visible.  However, multiperiodic processes remain
non-mixing because of their rigid periodic structure. For this reason
they are unlikely to serve as useful models of natural language. Yet,
they raise an interesting mathematical question: does there exist a
simple class of strongly mixing processes that simultaneously exhibits
Hilberg's law and vanishing entropy rate?  The present construction
does not answer this question, but it helps clarify the structural
obstacles that such an example would need to overcome.

\subsection{Road map of examples}
\label{secMap}

After these interdisciplinary divagations, which may be inspiring also
for strictly mathematical research, we would like to present a graphic
summary.  The summary takes form of a table describing discussed
stochastic processes and their properties. The checkmark (\checkmark)
indicates that a given property can be achieved for certain
parameters.
\begin{center}
  \bigskip
  \begin{tabular}{lccccc}
    class  & alphabet
    & $h=0$ & Hilberg's law & mixing
    & reference 
    \\
    \hline
    Gaussian & continuum
    & \checkmark & \checkmark & \checkmark
    & \cite{Debowski05en, Li06}
    \\
    finite-state HMM & finite
    & no & no & \checkmark
    & \cite{CrutchfieldFeldman03}
    \\
    countable-state HMM & finite
    & no & \checkmark & \checkmark
    & \cite{TraversCrutchfield11, Debowski14}
    \\
    default Santa Fe & countable
    & no & \checkmark & no
    & \cite{Debowski09,Debowski12,Hutter21}
    \\
    mixing Santa Fe & countable
    & no & \checkmark & \checkmark
    & \cite{Debowski12}
    \\
    default Oracle & finite
    & no & \checkmark & no
    & \cite{Debowski21b}
    \\
    default RHA & finite
    & \checkmark & \checkmark & ?
    & \cite{Debowski17,Debowski21}
    \\
    multiperiodic & countable
    & \checkmark & \checkmark & no
    & this work
    \end{tabular}
    \bigskip
\end{center}
As we can see, there is still no known class of simple discrete
examples that enjoys three checkmarks in the table. This open problem
is left for future investigation. Now let us proceed to the proper
results of this paper.



\section{Multiperiodic sequences and processes}
\label{secMultiperiodic}


This section develops the theory of multiperiodic processes in four
steps. First, we introduce multiperiodic sequences and the associated
stationary processes, establishing also their basic structural
properties. Second, we investigate their statistical behavior through
the distribution of symbols, waiting times, and vocabulary
growth. Third, we analyze the flow of information by studying the
estimation of hidden seeds from finite samples and the resulting block
entropy.  Finally, we illustrate the general theory using two concrete
families of examples based on constant and linear period sequences.

\subsection{Definition}
\label{secDefinitions}

We begin by defining multiperiodic sequences and the associated
multiperiodic processes. Along the way, we demonstrate several
fundamental properties that will be used throughout the paper,
including time-reversal symmetry, periodicity of decimation, existence
conditions, stationarity, and ergodicity.

\subsubsection{Multiperiodic sequences}
\label{secSequences}

Multiperiodic sequences are an idea which we have originally
generalized from work \cite{KalocinskiSteifer19}. These authors only
considered a fixed multiperiodic sequence with particular periods and
seeds --- to state a certain negative result in universal
prediction. Isolated instances of multiperiodic sequences are noted in
the on-line encyclopedia of integer sequences \cite{OEIS23} and can be
connected to other problems, such as Mancala-type games
\cite{BrolineLoeb95}. Distilling the general construction, especially
considering the idea of random seeds that matters in the probabilistic
context, seems our contribution.
\begin{definition}[multiperiodic sequence]
  \label{defiSequence}
  Consider a sequence $(\pi_k)_{k\in\mathbb{N}}$ of natural numbers
  $\pi_k\ge 2$. Let $(\sigma_k)_{k\in\mathbb{N}}$ be another sequence
  of natural numbers $\sigma_k\in\klam{1,2,\ldots,\pi_k}$ such that
  set $\klam{k\in\mathbb{N}:\sigma_k=1}$ is infinite. Parameters
  $\pi_k$ are called periods, whereas parameters $\sigma_k$ are called
  seeds.  Given $(\pi_k)_{k\in\mathbb{N}}$ and
  $(\sigma_k)_{k\in\mathbb{N}}$, the multiperiodic sequence
  $(k_t)_{t\in\mathbb{Z}}$ is the result of the initial condition
  \begin{align}
    \label{Initial}
    \phi_{k,1}=\sigma_k
  \end{align}
  and recursion
  \begin{align}
    \label{RecursionFirst}
    k_t
    &=\min\klam{k\in\mathbb{N}:\phi_{k,t}=1},
    \\
    \phi_{k,t+1}
    &=
      \begin{cases}
        \phi_{k,t}-1, & k<k_t,
        \\
        \pi_k, & k=k_t,
        \\
        \phi_{k,t}, & k>k_t,
      \end{cases}
  \end{align}
  which can be inverted as
  \begin{align}
    k_t
    &=\min\klam{k\in\mathbb{N}:\phi_{k,t+1}=\pi_k},
    \\
    \phi_{k,t}
    &=
      \begin{cases}
        \phi_{k,t+1}+1, & k<k_t,
        \\
        1, & k=k_t,
        \\
        \phi_{k,t+1}, & k>k_t.
      \end{cases}
      \label{RecursionLast}
  \end{align}
  The auxiliary matrix $(\phi_{k,t})_{k\in\mathbb{N},t\in\mathbb{Z}}$
  is called the hand matrix.
\end{definition}

Here we present two examples of multiperiodic sequences.
\begin{example}[A001511 in \cite{OEIS23}]
  Consider the multiperiodic sequence $(k_t)_{t\in\mathbb{N}}$ with
  constant periods $\pi_k=2$ and seeds $\sigma_k=1$, namely,
  \begin{align}
    1,2,1,3,1,2,1,4,1,2,1,3,1,2,1,5,1,2,1,3,1,2,1,4,1,2,1,\ldots  
  \end{align}
  Equivalently, $k_t$ is the number of $2$'s dividing $2t$.  The
  equivalence stems from the observation that $k_t$ is the position in
  the binary expansion of $t$ that is incremented when increasing from
  $t-1$ to $t$.  This sequence was applied by Kaloci\'nski and Steifer
  \cite{KalocinskiSteifer19} for a negative result in universal
  prediction.
\end{example}
Our applications are closer to this example:
\begin{example}[A028920 in \cite{OEIS23}]
  Let $\pi_k=1+k$ and $\sigma_k=1$ for all $k\in\mathbb{N}$. Then the
  multiperiodic sequence $(k_t)_{t\in\mathbb{N}}$ is
  \begin{align}
    1,2,1,3,1,4,1,2,1,5,1,6,1,2,1,3,1,7,1,2,1,8,1,4,1,2,1,\ldots
  \end{align}
  This sequence was discussed by Broline and Loeb \cite{BrolineLoeb95}
  in the context of Mancala-type games. The algorithm for generating
  this sequence is isomorphic with recursion
  (\ref{Initial})--(\ref{RecursionLast}).
\end{example}
\noindent
As we show in Section \ref{secLinear}, for periods $\pi_k\approx ck$,
the relative frequency of number $k$ approaches Zipf's law
(\ref{Zipf}) with $\alpha=(c+1)/c$.

Recursion (\ref{Initial})--(\ref{RecursionLast}) for $k\in\mathbb{N}$
can be rewritten as Algorithm \ref{algMultiperiodic}, called the
Infinite Clock.  The Infinite Clock operates with infinite data
structures for simplicity of presentation. It works like a clock with
infinitely many hands --- indexed by natural numbers --- that move
counterclockwise, hence the name. As in the usual clock, each hand
moves with a different speed, dictated by its specific period. The
initial positions of hands are given by seeds. The algorithm outputs
the value of the hand index whenever a hand passes noon.

\begin{algorithm}[h]
  \caption{\label{algMultiperiodic} The Infinite Clock}
\begin{algorithmic}[1]
  \Require{Sequence of $\pi_k\in\mathbb{N}$ for $k\in\mathbb{N}$.}
  \Comment{The periods.}
  \Require{Sequence of $\phi_k:=\sigma_k\in\klam{1,2,\ldots,\pi_k}$
    for $k\in\mathbb{N}$.}
  \Comment{The hands.}
  \Ensure{Sequence of $k_t\in\mathbb{N}$ for $t\in\mathbb{N}$.}
  \Comment{The multiperiodic sequence.}
  \For{$t\in\mathbb{N}$}
  \State{$k^*:= 0$}
  \State{$k:= 1$}
  \While{$k^*=0$}
  \If{$\phi_k>1$}
  \State{$\phi_k:=\phi_k-1$}
  \Else
  \State{$k^*:= k$}
  \EndIf
  \State{$k:= k+1$}
  \EndWhile
  \State{$\phi_{k^*}:= \pi_{k^*}$}
  \State{$k_t:= k^*$}
  \EndFor
\end{algorithmic}
\end{algorithm}

To be precise, Algorithm \ref{algMultiperiodic} operates with hand
variables $(\phi_k)_{k\in\mathbb{N}}$, where each hand $\phi_k$ takes
values in set $\klam{1,2,\ldots,\pi_k}$ and is initialized with the
value of the corresponding seed, $\phi_k:=\sigma_k$. To determine the
value of the next element $k_t$ of the sequence, the algorithm
decrements consecutive hands $(\phi_k)_{k\in\mathbb{N}}$ starting from
index $k=1$ until it finds an index $k^*$ such that
$\phi_{k^*}=1$. Then the loop is broken, this particular hand is reset
to its own maximal value $\phi_{k^*}:=\pi_{k^*}$, and the
corresponding index $k^*$ of the hand is output as the next element of
the sequence $k_t:=k^*$.

Let us observe that the computation falls into an infinite loop in
lines 4--9 of Algorithm \ref{algMultiperiodic} if and only if there is
no $k\in\mathbb{N}$ such that $\phi_k=1$.  This happens precisely if
and only if set $\klam{k\in\mathbb{N}:\sigma_k=1}$ is finite, which is
excluded by Definition \ref{defiSequence}.  The easiest way to prevent
this situation is to set the minimal seeds $\sigma_k=1$ for all
$k\in\mathbb{N}$.

As a general fact, we notice a time-reversal symmetry.
\begin{theorem}
  For the multiperiodic sequence $(k_t)_{t\in\mathbb{Z}}$ with periods
  $\pi_k$ and seeds $\sigma_k$, sequence $(k_{1-t})_{t\in\mathbb{Z}}$
  is the multiperiodic sequence with periods
  $(\pi_k)_{k\in\mathbb{N}}$ and seeds $(\sigma^R_k)_{k\in\mathbb{N}}$
  defined as $\sigma^R_k:=\pi_k-\sigma_k+1$.
\end{theorem}
\begin{proof}
  Let $(\phi^R_{k,t})_{k\in\mathbb{N},t\in\mathbb{Z}}$ and
  $(k^R_t)_{t\in\mathbb{Z}}$ be defined analogously as
  (\ref{RecursionFirst})--(\ref{RecursionLast}) but with the initial
  condition $\phi^R_{k,1}=\sigma^R_k$. Then by induction we obtain
  $\phi^R_{k,t}=\pi_k-\phi_{k,1-t}+1$ and $k^R_t=k_{1-t}$.
\end{proof}

Moreover, we observe a partial periodicity of decimated multiperiodic
sequences, which are also multiperiodic.
\begin{theorem}
  \label{theoDecimated}
  Let $(k^{\ge r}_t)_{t\in\mathbb{N}}$ be the subsequence of the
  multiperiodic sequence $(k_t)_{t\in\mathbb{N}}$ with periods $\pi_k$
  and seeds $\sigma_k$ from which we have removed all tokens
  $k_t<r$. We claim that:
  \begin{enumerate}
  \item Sequence $(k^{\ge r}_t-r+1)_{t\in\mathbb{N}}$ is the
    multiperiodic sequence with periods $\pi^{\ge r}_k=\pi_{k-r+1}$ and
    seeds $\sigma^{\ge r}_k=\sigma_{k-r+1}$.
  \item We have $k^{\ge r}_t=r \iff t\equiv \sigma_r\mod \pi_r$.
  \end{enumerate}
\end{theorem}
\begin{proof}
  We proceed by induction on $r$. As for the initial step, we
  have $(k^{\ge 1}_t)_{t\in\mathbb{N}}=(k_t)_{t\in\mathbb{N}}$. Thus
  claim 1.\ holds by definition, whereas claim 2.\ follows since
  $k_t=1$ if and only if $t\equiv \sigma_1\mod \pi_1$. As for the
  inductive step, it is sufficient to show that claim 1.\ holds,
  whereas claim 2.\ follows by claim 1.\ for the inductive step
  combined with claim 2.\ for the initial step. The proof of claim 1.\
  for the inductive step rests on the fact that to obtain
  $(k^{\ge r}_t)_{t\in\mathbb{N}}$ instead of
  $(k_t)_{t\in\mathbb{N}}$, it suffices to initialize $k:=r$ instead
  of $k:=1$ in line 3 of Algorithm \ref{algMultiperiodic}.
\end{proof}

\subsubsection{Multiperiodic processes}
\label{secProcesses}

Multiperiodic sequences can be applied to define stationary ergodic
processes. The basic idea is to apply a random initialization, that
is, uniformly random seeds. This leads to the next construction,
called a multiperiodic process. We observe that the existence of the
process is conditional --- given that the value $1$ appears in the seed
sequence infinitely often almost surely.
\begin{definition}[multiperiodic process]
  \label{defiMProcess}
  The multiperiodic process $(K_t)_{t\in\mathbb{Z}}$ with periods
  $(\pi_k)_{k\in\mathbb{N}}$ is the random multiperiodic sequence with
  periods $(\pi_k)_{k\in\mathbb{N}}$ and probabilistically independent
  random seeds $(\Sigma_k)_{k\in\mathbb{N}}$ with uniform
  distributions
  \begin{align}
    \label{Uniform}
    P(\Sigma_k=\sigma_k)=\frac{1}{\pi_k},
    \quad
    \sigma_k\in\klam{1,2,\ldots,\pi_k}
  \end{align}
  provided set $\klam{k\in\mathbb{N}:\Sigma_k=1}$ is infinite almost
  surely.
\end{definition}

Applying the complementary Borel-Cantelli lemmas for independent
events, we derive a simple criterion of existence of a multiperiodic
process.
\begin{theorem}
  The multiperiodic process exists if and only if its periods satisfy
  \begin{align}
    \sum_{k=1}^\infty \frac{1}{\pi_k}=\infty.
  \end{align}
\end{theorem}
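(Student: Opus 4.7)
The plan is to reduce the stated disjunction to the equivalent single condition $\prod_{r=1}^\infty (1 - 1/\pi_r) = 0$ and then treat necessity and sufficiency separately. The reformulation is a standard calculus fact: either some factor vanishes directly (when $\pi_s = 1$), or using $\log(1-x) \le -x$ on one side and $\log(1-x) \ge -2x$ for $x \le 1/2$ on the other, the divergence of $\sum_r 1/\pi_r$ (in the absence of any $\pi_s = 1$) is equivalent to the vanishing of the product. By independence of the seeds, the product also equals $P(\Sigma_r \ne 1 \text{ for all } r \in \mathbb{N})$, which is the probabilistic quantity that will drive both directions of the proof.

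For necessity, suppose $\prod_r (1 - 1/\pi_r) > 0$. Then with positive probability every $\Sigma_r$ exceeds $1$, in which case the initial clock configuration at $t = 1$ satisfies $\phi_r > 1$ for all $r$. Algorithm~\ref{algMultiperiodic} then enters the infinite loop described in Section~\ref{secAlgorithm}, and Definition~\ref{defiMSequence} assigns $K_1 = \infty$ on an event of positive probability, so the process fails to be finite.

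For sufficiency, I would treat the two cases of the disjunction. If $\pi_s = 1$ for some $s$, then $\Sigma_s = 1$ almost surely and the clock $\phi_s$ equals $1$ permanently (every reset returns it to $\pi_s = 1$); hence whenever the inner loop of Algorithm~\ref{algMultiperiodic} reaches index $s$ it halts, so $K_t \le s$ for every $t$ almost surely. Otherwise all $\pi_r \ge 2$ and the series diverges, so $\sum_{r > R} 1/\pi_r = \infty$ and $\prod_{r > R} (1 - 1/\pi_r) = 0$ for every finite $R$. I would then argue by induction on $t$ that $P(K_s < \infty \text{ for all } s \le t) = 1$. On the inductive event for $t-1$, the index $R = \max_{s < t} K_s$ is finite and only the seeds $\Sigma_1, \ldots, \Sigma_R$ have been consulted by the algorithm, so $(\Sigma_r)_{r > R}$ remain independent and uniform. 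The event $\{K_t = \infty\}$ then forces $\Sigma_r > 1$ for every $r > R$, whose conditional probability is zero by the vanishing tail product.

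The main delicacy lies in making this induction rigorous, since $R$ is random and the partition of seeds into ``consulted'' and ``untouched'' depends on the history of the algorithm. The clean device is to decompose the event $\{K_t = \infty\} \cap \{K_s < \infty \text{ for all } s < t\}$ as a countable disjoint union over the possible values $R = R_0 \in \{0, 1, 2, \ldots\}$: on each piece the state of $\phi_1, \ldots, \phi_{R_0}$ at step $t$ is measurable with respect to $\Sigma_1, \ldots, \Sigma_{R_0}$, while the requirement that the inner loop continue past index $R_0$ contributes the independent factor $\prod_{r > R_0}(1 - 1/\pi_r) = 0$. Summing over $R_0$ yields probability zero, which closes the induction; finiteness for all $t$ simultaneously then follows because a countable intersection of probability-one events has probability one.
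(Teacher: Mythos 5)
Your proposal is correct and takes essentially the same route as the paper's proof: both rest on the observation that, on the event $M_{t-1}=m$ (the maximum so far), the untouched seeds $(\Sigma_r)_{r>m}$ remain independent and uniform, that $K_t=\infty$ corresponds to $\Sigma_r>1$ for all $r>m$, and that the tail product $\prod_{r>m}(1-1/\pi_r)$ vanishes exactly when the stated disjunction holds. Your explicit disjoint decomposition over the values of $M_{t-1}$ is simply a rigorous rendering of the paper's conditioning step, and your use of one-sided implications in each direction is, if anything, slightly more careful than the paper's claimed equivalence.
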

\begin{proof}
  Observe that events $(\Sigma_k=1)$ are independent.  Hence by the
  Borel-Cantelli lemmas \cite[Theorems 4.3 and 4.4]{Billingsley79},
  set $\klam{k\in\mathbb{N}:\Sigma_k=1}$ is infinite almost surely if
  and only if
  \begin{align}
    \sum_{k=1}^\infty P(\Sigma_k=1)=\infty.
  \end{align}
  In view of the uniform distributions of
  $(\Sigma_k)_{k\in\mathbb{N}}$, we obtain the claim.
\end{proof}

Moreover, through the approximation by periodic processes, each
multiperiodic process admits a frequency interpretation of its
probability distribution. Namely, it inherits the properties of being
stationary and ergodic from the approximating periodic sources.
\begin{theorem}
  The multiperiodic process $(K_t)_{t\in\mathbb{Z}}$ is stationary and
  ergodic but is not mixing.
\end{theorem}
\begin{proof}
  A sequence $(a_t)_{t\in\mathbb{Z}}$ is called $p$-periodic if
  \begin{align}
    (a_{t+s})_{t\in\mathbb{Z}}=(a_t)_{t\in\mathbb{Z}}\iff s\equiv
    0\mod p.
  \end{align}
  A stochastic process $(A_t)_{t\in\mathbb{Z}}$ is called $p$-periodic
  if
  \begin{align}
    P((A_t)_{t\in\mathbb{Z}}=(a_{t+s})_{t\in\mathbb{Z}})=\frac{1}{p},
    \quad
    s\ge 0,
  \end{align}
  for a certain $p$-periodic sequence $(a_t)_{t\in\mathbb{Z}}$.  It is
  known that $p$-periodic processes are stationary and ergodic but
  they are not mixing \cite[Section 4.1]{Debowski21}.

  Let an $m\in\mathbb{N}$. Let $K^{\le m}_t:=\min\klam{K_t,m}$.
  Observe that process $(K^{\le m}_t)_{t\in\mathbb{Z}}$ depends only
  on seeds $(\Sigma_k)_{k\le m}$. Denote
  $p_m:=\prod_{k=1}^{m}\pi_k$. We will show that process
  $(K^{\le m}_t)_{t\in\mathbb{Z}}$ is $p$-periodic where $p\le p_m$.

  Let $(\Phi_{k,t})_{k\in\mathbb{N},t\in\mathbb{Z}}$ be the hand
  matrix of sequence $(K_t)_{t\in\mathbb{Z}}$.  By recursion
  (\ref{RecursionFirst})--(\ref{RecursionLast}), there is a bijection
  $(\Phi_{k,t})_{k\le m}\leftrightarrow (\Phi_{k,t+1})_{k\le
    m}$. Hence process $(\Phi_{k,t})_{k\le m,t\in\mathbb{Z}}$ is a
  Markov process over time $t$.  There are exactly $p_m$ distinct
  states of hands $(\phi_k)_{k\le m}$ and it can be shown that all
  states communicate. Hence Markov process
  $(\Phi_{k,t})_{k\le m,t\in\mathbb{Z}}$ is irreducible, each state
  $(\phi_k)_{k\le m}$ is visited exactly once within the period $p_m$,
  and the uniform marginal distribution of the initial state
  $(\Phi_{k,1})_{k\le m}=(\Sigma_k)_{k\le m}$ is the invariant
  distribution. In turn, process
  $(\Phi_{k,t})_{k\le m,t\in\mathbb{Z}}$ is stationary, ergodic, and
  $p_m$-periodic. Since each $K^{\le m}_t$ is a function of
  $(\Phi_{k,t})_{k\le m}$, hence process
  $(K^{\le m}_t)_{t\in\mathbb{Z}}$ is $p$-periodic where $p\le p_m$.

  Thus process $(K^{\le m}_t)_{t\in\mathbb{Z}}$ is stationary and
  ergodic but is not mixing.  Since $m\in\mathbb{N}$ is arbitrarily
  chosen, hence also process $(K_t)_{t\in\mathbb{Z}}$ is stationary
  and ergodic but is not mixing. The reason is that verifying
  stationarity, ergodicity, and mixing can be reduced to checking
  probabilities of finite blocks by \cite[Theorem 4.10]{Debowski21} or
  \cite[Lemma 7.15]{Gray09}.
\end{proof}

\subsection{Statistics}
\label{secStatistics}

In this subsection, we analyze several statistics of multiperiodic
sequences and multiperiodic processes. We begin with the relative
frequencies and the marginal distribution. The waiting times are the
second kind of statistics. The third sort of investigated quantities
relate to the number of observed types. Some results concerning these
quantities are established by applying the decimation properties shown
in Theorem \ref{theoDecimated}, the Birkhoff ergodic theorem, the
union bound, and the Borel-Cantelli lemma.

\subsubsection{Relative frequencies}
\label{secFrequencies}

By the decimation property established in Theorem \ref{theoDecimated},
for a multiperiodic sequence there exist limiting relative frequencies.
\begin{theorem}
  \label{theoMarginalSequence}
  Let $(k_t)_{t\in\mathbb{Z}}$ be a multiperiodic sequence with
  periods $(\pi_k)_{k\in\mathbb{N}}$.  The relative frequencies exist
  and equal
  \begin{align}
    \lim_{t\to\infty}\frac{1}{t}\sum_{j=1}^t \boole{k_j\ge k}
    =
    f_k
    :=
    \prod_{j=1}^{k-1}
    \frac{\pi_j-1}{\pi_j}
    .
    \label{Marginal}
  \end{align}
\end{theorem}
\begin{proof}
  Let $(k_t^{\ge r})_{t\in\mathbb{N}}$ be the subsequence of
  $(k_t)_{t\in\mathbb{N}}$ from which we have removed all tokens
  $k_t<r$.  We proceed by induction on $k$. First, we observe
  \begin{align}
    \lim_{t\to\infty}\frac{1}{t}\sum_{j=1}^t \boole{k_j\ge 1}=1=f_1.
  \end{align}
  Second, in view of claim 2.\ of Theorem \ref{theoDecimated}, we
  notice
  \begin{align}
    &\lim_{t\to\infty}\frac{1}{t}\sum_{j=1}^t \boole{k_j\ge k}
      \nonumber\\
    &=
    \okra{\lim_{t\to\infty}\frac{1}{t}\sum_{j=1}^t \boole{k_j\ge k-1}}
    \okra{\lim_{t\to\infty}\frac{1}{t}\sum_{j=1}^t \boole{k_j^{(k-1)}\ge k}}
      \nonumber\\
    &=
    f_{k-1}\okra{\frac{\pi_{k-1}-1}{\pi_{k-1}}}
    =
    f_k.
  \end{align}
  Hence the claim follows.
\end{proof}

Hence by ergodicity of the multiperiodic process, the limiting
relative frequencies are equal to the marginal distribution of the
process.
\begin{theorem}
  \label{theoMarginalProcess}
  The multiperiodic process $(K_t)_{t\in\mathbb{Z}}$ with periods
  $(\pi_k)_{k\in\mathbb{N}}$ has the marginal distribution
  \begin{align}
    \label{MarginalProcess}
    P(K_0=k)=p_k:=f_k-f_{k+1},
  \end{align}
  where $f_k$ is the marginal distribution function (\ref{Marginal}).
\end{theorem}
\begin{proof}
  We obtain distribution (\ref{MarginalProcess}) by the Birkhoff
  ergodic theorem, which establishes the almost sure equality of
  probabilities and relative frequencies for stationary ergodic
  processes \cite{Birkhoff32,Garsia65}.
\end{proof}

\subsubsection{Waiting times}
\label{secTimes}

Another important statistic of a stochastic process is the waiting
time. Following \cite{Kac47, Kontoyiannis98}, we define the waiting
time as the first position in the process where the value of interest
occurs. That is we put
\begin{align}
  W_k:=\inf\klam{t\in\mathbb{N}: K_t=k}.
\end{align}
The celebrated Kac lemma \cite{Kac47} predicts that the conditional
expectation of the waiting time equals the inverse probability of the
recurring value. Namely, we have
\begin{align}
 \mean\okra{W_k|K_0=k}=\frac{1}{P(K_0=k)}
\end{align}
for a stationary ergodic process $(K_t)_{t\in\mathbb{Z}}$.

Complementing this result, for a multiperiodic process, we can obtain
a uniform bound as follows.
\begin{theorem}
  \label{theoWaitingRandom}
  For a multiperiodic process $(K_t)_{t\in\mathbb{Z}}$ with periods
  $(\pi_k)_{k\in\mathbb{N}}$, we have a uniform bound
  \begin{align}
    \label{WaitingRandom}
    W_k
    \le
    w_k:=\sum_{j=2}^{k}\frac{1}{f_j}+\frac{\pi_k}{f_k}
    ,
  \end{align}
  where $f_k$ is the marginal distribution function (\ref{Marginal}).
\end{theorem}
\begin{proof}
  Consider a prefix $(K_1,K_2,\ldots,K_{T_1})$ of the multiperiodic
  process. When we delete all $K_t<r$, the sequence shortens to
  $(K^{\ge r}_1,K^{\ge r}_2,\ldots,K^{\ge r}_{T_r})$. We observe that
  $W_k\le T_1$ if $T_k=\pi_k$, whereas the partial periodicity implies
  \begin{align}
    T_r-T_{r+1}\le \frac{T_r}{\pi_r}+1.
  \end{align}
  Hence we have
  \begin{align}
    T_r\le (T_{r+1}+1)\cdot\frac{\pi_{r}}{\pi_{r}-1}
  \end{align}
  and consequently we obtain
  \begin{align}
    W_k
    &\le
      \okra{\ldots
      \okra{
      \okra{
      \okra{\pi_k+1}
      \frac{\pi_{k-1}}{\pi_{k-1}-1}+1}
      \frac{\pi_{k-2}}{\pi_{k-2}-1}+1}\ldots}
      \frac{\pi_1}{\pi_1-1}
      \nonumber\\
    &=
      \sum_{j=2}^{k}\frac{1}{f_j}+\frac{\pi_k}{f_k}
      =
      w_k
  \end{align}   
  in view of identity
  \begin{align}
    \okra{\ldots\okra{\okra{\okra{a_{k}+1}a_{k-1}+1}a_{k-2}+1}\ldots}a_1
    =\sum_{j=1}^{k}\prod_{i=1}^{j} a_i.
  \end{align}
\end{proof}

Moreover, for increasing periods, we obtain that the waiting times
increase monotonically, whereas the marginal probabilities are
monotonically decreasing.
\begin{definition}[increasing periods]
  Periods $(\pi_k)_{k\in\mathbb{N}}$ are called increasing if
  $\pi_{k+1}\ge\pi_k$ for all $k\in\mathbb{N}$.
\end{definition}
\begin{theorem}
  For increasing periods $(\pi_k)_{k\in\mathbb{N}}$, we have:
  \begin{align}    
    w_{k+1}&\ge w_k,
    \\
    P(K_0=k+1)&\le P(K_0=k).
  \end{align}
\end{theorem}
\begin{proof}
  We have $w_{k+1}\ge w_k$ since
  \begin{align}
    \pi_k+1\le \pi_{k+1}+1\le (\pi_{k+1}+1)\frac{\pi_k}{\pi_k-1}+1.
  \end{align}
  Similarly $P(K_0=k)=f_k-f_{k+1}\ge f_{k+1}-f_{k+2}=P(K_0=k+1)$ since
  \begin{align}
    1-\frac{\pi_k-1}{\pi_k}\ge
    \frac{\pi_k-1}{\pi_k}\okra{1-\frac{\pi_{k+1}-1}{\pi_{k+1}}}.
  \end{align}
\end{proof}

\subsubsection{Number of types}
\label{secTypes}

Now we proceed to bounding the number of types observed in a finite
sample. We prove results of a varying generality. Some of these
propositions hold for general stationary sources and may be useful for
quantitative linguistic applications revolving around studying the
type-token ratio \cite{Baayen01}.

Let us denote the random set of observed types
\begin{align}
  \label{Types}
  \mathcal{V}_t:=\klam{K_1,K_2,\ldots,K_t}
  =\klam{k\in\mathbb{N}:W_k\le t}.
\end{align}
We denote its cardinality, the random number of types in a finite sample,
\begin{align}
  V_t:=\card\mathcal{V}_t.
\end{align}
To bound the number of types effectively, we introduce four other
statistics:
\begin{align}
  u_t&:=\min\klam{k\in\mathbb{N}: w_k>t},
  \\
  U_t&:=\min\klam{k\in\mathbb{N}: k\not\in\mathcal{V}_t},
  \\
  g_t&:=\max\klam{k\in\mathbb{N}:P(K_0=k)\ge\frac{1}{t}},
  \\
  q_t&:=P(K_0>g_t).
\end{align}

The first result is a simple sandwich bound for multiperiodic
processes.
\begin{theorem}
  For a multiperiodic process $(K_t)_{t\in\mathbb{Z}}$, we have
  \begin{align}
    \label{SandwichTypesI}
    u_t-1
    \le
    U_t-1
    \le
    V_t
    .
  \end{align}
\end{theorem}
\begin{proof}
  The inequalities follow by $w_k\ge W_k$ and
  $U_t-1\le \card\mathcal{V}_t$.
\end{proof}

The subsequent two statements pertain to general stationary processes.
We note that functions $g_t$ and $q_t$ were introduced by Khmaladze
\cite{Khmaladze88} for memoryless sources.  Generalizing his technique
from memoryless processes to stationary ones, we can prove this
proposition.
\begin{theorem}
  For a stationary process $(K_t)_{t\in\mathbb{Z}}$ over natural
  numbers, we have
  \begin{align}
    \label{SandwichTypesII}
    \mean V_t
    \le
    g_t+tq_t.
  \end{align}
\end{theorem}
\begin{proof}
  We observe
  \begin{align}
    \mean V_t
    =
    \mean \sum_{k=1}^\infty \boole{k\in\mathcal{V}_t}
    =
    \sum_{k=1}^\infty P(k\in\mathcal{V}_t)
    \le
    g_t+\sum_{k>g_t} P(k\in\mathcal{V}_t)
  \end{align}
  and the union bound for stationary processes
  \begin{align}
    P(k\in\mathcal{V}_t)
    =
    P(K_1=k\lor \ldots\lor K_t=k)
    \le
    tP(K_0=k).
  \end{align}
  Hence the claim follows.
\end{proof}

Quotient $V_t/t$ is called the type-token ratio in quantitative
linguistics \cite{Herdan64, Baayen01}. It is easy to see
that it tends to zero in the stationary case.
\begin{theorem}
  For a stationary process $(K_t)_{t\in\mathbb{Z}}$ over natural
  numbers, 
  \begin{align}
    \label{TypeTokenExp}
    \lim_{t\to\infty}\frac{\mean V_t}{t}&=0.
  \end{align}
\end{theorem}
\begin{proof}
  Applying the ideas from the proof of the previous theorem, we
  observe that bound
  \begin{align}
    \label{SandwichTypesIII}
    \mean V_t \le h_t+tP(K_0>h_t)
  \end{align}
  holds for any function $h_t$. In particular, for an $\epsilon>0$, we
  may take $h_t=\epsilon t/2$. For all sufficiently large $t$, we
  observe $P(K_0>h_t)\le \epsilon/2$. Hence, for these $t$, we have
  $\mean V_t/t \le h_t/t+P(K_0>h_t)\le \epsilon$. By arbitrariness of
  $\epsilon$, we derive (\ref{TypeTokenExp}).
\end{proof}

We will need a stronger bound that holds under a mild condition. 
\begin{theorem}
  For a stationary process $(K_t)_{t\in\mathbb{Z}}$ over natural
  numbers such that $\mean (\log K_0)^3<\infty$, we have
  \begin{align}
    \label{TypeTokenLogExp}
    \lim_{t\to\infty}\frac{\mean V_t\log t}{t}&=0,
    \\
    \label{TypeTokenLogAS}
    \lim_{t\to\infty}\frac{V_t\log t}{t}&=0 \text{ a.s.}
  \end{align}
\end{theorem}
\begin{proof}
  Observe by the Markov inequality that 
  \begin{align}
    \sum_{k=1}^\infty kP\okra{K_0\ge \frac{2^k}{k^3}}
    &\le
      \sum_{k=1}^\infty kP\okra{(\log K_0)^3\ge (k-3\log k)^3}
      \nonumber\\
    &\le
    \mean (\log K_0)^3 \sum_{k=1}^\infty \frac{k}{(k-3\log k)^3}<\infty.
  \end{align} 
  Choose $h_t=t/(\log t)^3$ in (\ref{SandwichTypesIII}). Then we obtain
  \begin{align}
    \frac{\mean V_{2^k}k}{2^k}\le \frac{1}{k^2}+kP\okra{K_0\ge \frac{2^k}{k^3}}.
  \end{align}
  Consequently, by the Markov inequality, we have
  \begin{align}
    \sum_{k=1}^\infty P\okra{\frac{V_{2^k}k}{2^k}\ge \epsilon}
    &\le
      \frac{1}{\epsilon}\sum_{k=1}^\infty \frac{\mean V_{2^k}k}{2^k}
      \nonumber\\
    &\le
      \frac{1}{\epsilon}\sum_{k=1}^\infty
      \okra{\frac{1}{k^2} +kP\okra{K_0\ge \frac{2^k}{k^3}}}<\infty.
  \end{align}
  Hence $\lim_{k\to\infty} \mean V_{2^k}k/2^k=0$, whereas by the
  Borel-Cantelli lemma \cite[Theorems 4.3]{Billingsley79}, we derive
  $V_{2^k}k/2^k<\epsilon$ for all but finitely many $k$ almost
  surely. Notice now that $V_t$ is a growing function of $t$. Hence
  $(V_t\log t)/t\le 2V_{2^k}k/2^k$ for $2^{k-1}\le t\le 2^k$. In
  particular, we have (\ref{TypeTokenLogExp}) and
  $\limsup_{t\to\infty} (V_t\log t)/t\le 2\epsilon$ almost surely.  By
  arbitrariness of $\epsilon$, we derive (\ref{TypeTokenLogAS}).
\end{proof}

\subsection{Information}
\label{secInformation}

In this subsection, we demonstrate propositions that are central from our
particular perspective. The first result concerns the flow of
information in the description of a sample drawn from a multiperiodic
process.  We show that a sample of a multiperiodic process carries
exactly the same information as the set of seeds for types appearing
in the sample. These two objects can be computed from one another and
hence their entropies are equal. Consequently, this result allows to
sandwich bound the block entropy of a multiperiodic process by the
number of types and related statistics. In particular, Hilberg's law
follows from Heaps' law, cf.\ \cite{Debowski25g} for a similar result
for Santa Fe processes.

\subsubsection{Seed estimator}
\label{secSeeds}

Let us analyze the relationship between a finite multiperiodic sample
and the pool of random seeds in more detail. For bounding the entropy
of the sample, it pays off to state this proposition explicitly.
\begin{theorem}
  \label{theoSeeds}
  Consider a multiperiodic process $(K_t)_{t\in\mathbb{Z}}$. 
  We claim that:
  \begin{enumerate}
  \item Set $\klam{(k,\Sigma_k): k\in\mathcal{V}_t}$ is a
    function of sample $K_1^t$.
  \item Define the seed estimator
    \begin{align}
      \hat\Sigma_k:=
      \begin{cases}
        \Sigma_k, & k\in\mathcal{V}_t,
        \\
        \pi_k, & k\not\in\mathcal{V}_t.
      \end{cases}
    \end{align}
    Let $(\hat K_t)_{t\in\mathbb{Z}}$ be the multiperiodic sequence with
    seeds $(\hat\Sigma_k)_{k\in\mathbb{N}}$. Then
    \begin{align}
      K_1^t=\hat K_1^t.
    \end{align}
  \item Sample $K_1^t$ is a function of set
    $\klam{(k,\Sigma_k): k\in\mathcal{V}_t}$.
  \item If $k\in\mathcal{V}_t$ then $k\le M_t$ and
    $\Sigma_k\le t$, where
    \begin{align}
    M_t:=\max\mathcal{V}_t.
    \end{align}
  \end{enumerate}
\end{theorem}
\begin{proof}
  The subsequent claims are dealt with below.
  \begin{itemize}
  \item Let $L_k=\min\klam{t\ge 1:K_t^{(k)}=k}$, where
    $(K^{(k)}_t)_{t\in\mathbb{N}}$ is the subsequence of sequence
    $(K_t)_{t\in\mathbb{N}}$ from which we have removed all tokens
    $k_t<k$.  If $k\in\mathcal{V}_t$ then the first $L_k$ tokens of
    $(K^{(k)}_t)_{t\in\mathbb{N}}$ are a function of sample $K_1^t$.
    By claim 2.\ of Theorem \ref{theoDecimated}, $L_k=\Sigma_k$. Hence
    we obtain claim 1.
  \item We observe that altering $\Sigma_k$ for
    $k\not\in\mathcal{V}_t$ to any value in set
    \begin{align*}
      \klam{\Sigma_k,\Sigma_k+1,\ldots,\pi_k}
    \end{align*}
    does not change the resulting multiperiodic sample $K_1^t$. It is
    so since any such change only delays the occurrence of token $k$
    beyond the scope of $K_1^t$, leaving the tokens within $K_1^t$
    intact. Hence claim 2.\ follows.
  \item Obviously, claim 2.\ implies claim 3.
  \item If $k\in\mathcal{V}_t$ then $k\le\max\mathcal{V}_t$ and
    $\Sigma_k=L_k\le t$ since the first $L_k$ tokens of sequence
    $(K^{(k)}_t)_{t\in\mathbb{N}}$ are a subsequence of sample
    $K_1^t$. Hence we obtain claim 4. \qedhere
  \end{itemize}
\end{proof}

\subsubsection{Block entropy}
\label{secEntropy}

Now, we will sandwich bound the Shannon entropy in terms of the number
of types.  For this goal, we will use a bound stemming from the
H\"older and Jensen inequalities. Let $\ln$ denote the natural
logarithm.
\begin{lemma}
  \label{theoHolderLog}
  Suppose that $X\ge 0$ and $Y\ge 1$. Then for any $p>1$, we have
  \begin{align}
    \mean X\ln Y\le 
    (\mean X^p)^{1/p}\okra{\ln\mean Y+\frac{1}{p-1}}.
  \end{align}
\end{lemma}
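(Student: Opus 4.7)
The plan is to combine Hölder's inequality with a carefully chosen pointwise bound on $(\log y)^q$. First, I would invoke Hölder with conjugate exponents $p$ and $q := p/(p-1)$, giving
\begin{align*}
\mean X \log Y \le (\mean X^p)^{1/p} \okra{\mean (\log Y)^q}^{1/q}.
\end{align*}
Since $1/(p-1) = q - 1$, the lemma reduces to establishing the key moment estimate $\okra{\mean (\log Y)^q}^{1/q} \le \log \mean Y + (q-1)$. Writing $V := \log Y \ge 0$, $M := \log \mean Y$, and $v_0 := M + q - 1$, this is equivalent to $\mean V^q \le v_0^q$ subject to the moment constraint $\mean e^V = e^M$.

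For this moment estimate the central idea is to dominate $v^q$ pointwise by an affine function of $e^{v-M}$, that is, to find constants $a$ and $c$ with $v^q \le a\,e^{v-M} + c$ for all $v \ge 0$. The coefficients are forced by tangency at $v = v_0$: matching the derivatives of $v^q$ and $a\,e^{v-M}$ at $v_0$ yields $a = q\,v_0^{q-1}\,e^{1-q}$, while $c$ is taken to be the supremum of $\phi(v) := v^q - a\,e^{v-M}$ over $v \ge 0$. A short analysis of $\phi$ shows that $v_0$ is its only interior local maximum on $(0,\infty)$ (with $\phi''(v_0) = -q\,v_0^{q-2}\,M \le 0$) and $\phi(v) \to -\infty$ as $v \to \infty$, so the supremum is attained at either $v_0$ or the boundary $v = 0$. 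Integrating the pointwise bound against the law of $V$ and using $\mean e^V = e^M$ gives $\mean V^q \le a + c$, and the task reduces to verifying $a + c \le v_0^q$.

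This verification splits into two elementary sub-cases. When the supremum of $\phi$ is attained at $v_0$, then $c = v_0^q - q\,v_0^{q-1}$ and $a + c = v_0^q - q\,v_0^{q-1}(1 - e^{1-q}) \le v_0^q$ because $e^{1-q} \le 1$ for $q \ge 1$. When the supremum is attained at $v = 0$ instead, then $c = -a\,e^{-M}$ and the required inequality reduces to $q(1 - e^{-M}) \le (M + q - 1)\,e^{q-1}$; this last estimate follows from the elementary fact that $e^{q-1} \ge q$ for $q \ge 1$ together with a short monotonicity argument in $M$ (the difference of the two sides vanishes at a non-negative value at $M = 0$ and has non-negative derivative on $[0,\infty)$). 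Taking $q$-th roots of $\mean V^q \le v_0^q$ yields the key moment estimate, and combining it with the initial Hölder step proves the lemma. The principal obstacle is spotting the correct target $v_0 = \log \mean Y + 1/(p-1)$ for the tangent construction; once this is identified, the remaining verifications are elementary calculus.
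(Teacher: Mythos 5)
Your proof is correct, and while it opens with the same H\"older step as the paper (with $Z=\log Y$ and $q=p/(p-1)$, so that $q-1=1/(p-1)$), the way you establish the key moment estimate $\okra{\mean(\log Y)^q}^{1/q}\le\log\mean Y+q-1$ is genuinely different. The paper does it in two lines: since $(\log Y)^q\le(\log(e^{q-1}Y))^q$ for $Y\ge 1$ and $y\mapsto(\log(e^{q-1}y))^q$ is concave on $y\ge 1$ (its second derivative is $\le 0$ exactly where $\log(e^{q-1}y)\ge q-1$), Jensen gives $\mean(\log(e^{q-1}Y))^q\le(\log(e^{q-1}\mean Y))^q=(\log\mean Y+q-1)^q$. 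You instead dominate $v^q$ pointwise by $a\,e^{v-M}+c$ with $a$ fixed by tangency at $v_0=M+q-1$ and $c=\sup\phi$, then integrate against the constraint $\mean e^V=e^M$; this is a Cram\'er--Chernoff-style exponential-tilting argument rather than a concavification-plus-Jensen argument. Your route checks out: the claim that $\sup_{v\ge0}\phi=\max\klam{\phi(0),\phi(v_0)}$ holds because $\log(qv^{q-1})$ is concave while $\log(a e^{v-M})$ is affine, so $\phi'$ changes sign at most twice, with $v_0$ the local maximum (in the degenerate case $M=0$ one has $\phi''(v_0)=0$ and $\phi$ is nonincreasing, but then $Y=1$ a.s.\ and the bound is trivial); both of your terminal verifications, $e^{1-q}\le1$ in the interior case and $qe^{-M}\le q\le e^{q-1}$ plus monotonicity in $M$ in the boundary case, are sound. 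The trade-off is that the paper's shift-and-Jensen trick is shorter and avoids any case analysis, whereas your tangent construction is more mechanical and makes transparent why the additive constant must be $q-1$ (it is forced by the tangency point), at the cost of an extra boundary case to check.
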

\begin{proof}
  Let $p>1$ and $q=p/(p-1)$. For $X,Z\ge 0$, the H\"older
  inequality is
  \begin{align}
    \mean XZ\le (\mean X^p)^{1/p}(\mean Z^q)^{1/q}.
  \end{align}
  The second derivative of function $y\mapsto(\ln y)^q$ is negative
  for $y>e^{q-1}$. Hence by the Jensen inequality (J), we obtain
  \begin{align}
    \mean(\ln Y)^q\le\mean(\ln(e^{q-1}Y))^q
    \stackrel{(J)}{\le}
    (\ln(e^{q-1}\mean Y))^q\le\okra{\ln\mean Y+\frac{1}{p-1}}^q.
  \end{align}
  Using $Z=\ln Y$ in the H\"older inequality and chaining it with the
  above chain of inequalities, we derive the claim.
\end{proof}

Subsequently, we bound the entropy. Here $\log$ is the base 2
logarithm.
\begin{theorem}
  \label{theoEntropyBound}
  For a multiperiodic process $(K_t)_{t\in\mathbb{Z}}$ and for any
  $q\ge 1$,
  \begin{align}
  \label{EntropyBound}
    u_t-1
    \le
    H(K_1^t)
    \le
    \okra{e\mean V_t+3}
    \okra{(q+2)\log t+q\log\mean K_0^{1/q}}+2.
  \end{align}
\end{theorem}
\begin{proof}
  By claim 1.\ of Theorem \ref{theoSeeds} and property
  $k<u_t\implies k\in\mathcal{V}_t$, the tuple of seeds
  $\Sigma_1^{u_t-1}$ is a function of sample $K_1^t$. Hence we obtain
  the lower bound
  \begin{align}
    H(K_1^t)\ge H(\klam{(k,\Sigma_k): k\in\mathcal{V}_t})\ge
    H(\Sigma_1^{u_t-1})\ge u_t-1,
  \end{align}
  since seeds are independent and uniformly distributed over two or
  more values.  As for the upper bound, by claims 3.\ and 4.\ of
  Theorem \ref{theoSeeds}, we obtain
\begin{align}
  H(K_1^t)
  &\le
  H(\klam{(k,\Sigma_k): k\in\mathcal{V}_t})
  \nonumber\\
  &\le H(M_t)+H(\klam{(k,\Sigma_k): k\in\mathcal{V}_t}|M_t)
  \nonumber\\
  &\le 2\mean\log(M_t+1)+\mean (V_t+1)(\log M_t+\log t)
  \nonumber\\
  &\le 2\mean\log M_t+\mean (V_t+1)(\log M_t+\log t)+2.
\end{align}

Next, we observe $1\le M_t^{1/q}\le \sum_{j=1}^t K_j^{1/q}$. Hence
$\mean M_t^{1/q}\le t\mean K_0^{1/q}$ by stationarity. Thus, by
Lemma~\ref{theoHolderLog}, for any $p>1$, we may write
\begin{align}
  H(K_1^t)
  &\le 2\mean q\log M_t^{1/q}+
    \mean (V_t+1)\okra{q\log M_t^{1/q}+\log t}+2
    \nonumber\\
  &\le [(\mean V_t^p)^{1/p}+3]
    \okra{q\log\mean M_t^{1/q}+\log t+\frac{\log e}{p-1}}+2
    \nonumber\\
  &\le [(\mean V_t^p)^{1/p}+3]
    \okra{(q+1)\log t+q\log\mean K_0^{1/q}+\frac{\log e}{p-1}}+2
    .  
\end{align}

Let us substitute $p=1+1/\ln t$ and apply inequality
$V_t^p\le t^{p-1} V_t$. Then we may rewrite
\begin{align}
  H(K_1^t)
  &\le \okra{(t^{p-1}\mean V_t)^{1/p}+3}
    \okra{(q+1)\log t+q\log\mean K_0^{1/q}+\frac{\log e}{p-1}}+2
    \nonumber\\
  &= \okra{(e\mean V_t)^{\frac{\ln t}{\ln t+1}}+3}
    \okra{(q+2)\log t+q\log\mean K_0^{1/q}}+2
    \nonumber\\
  &\le \okra{e\mean V_t+3}
    \okra{(q+2)\log t+q\log\mean K_0^{1/q}}+2,
\end{align}
which is the claimed inequality.
\end{proof}

Hence we have a criterion for the vanishing Shannon entropy rate.
\begin{theorem}
  Suppose that $\mean K_0^{1/q}<\infty$ for a $q\ge 1$. Then the
  multiperiodic process $(K_t)_{t\in\mathbb{Z}}$ satisfies
  \begin{align}
    \label{ZeroEntropy}
    \lim_{t\to\infty}\frac{H(K_1^t)}{t}&=0.
  \end{align}
\end{theorem}
\begin{proof}
  The claim follows by (\ref{TypeTokenLogExp}) and
  (\ref{EntropyBound}) since $\mean K_0^{1/q}<\infty$ for a $q\ge 1$
  implies $\mean (\log K_0)^3<\infty$.
\end{proof}

\subsection{Examples}
\label{secExamples}

Theorem \ref{theoEntropyBound} implies that Hilberg's law
(\ref{Hilberg}) for a multiperiodic process essentially follows from
Heaps' law (\ref{Heaps}).  In this subsection, we will work out two
concrete examples of processes.  First, we will analyze the
multiperiodic process with all periods equal to a fixed constant. This
case leads to the number of types that grows logarithmically with the
sample size.  Second, we will inspect the multiperiodic process with
periods that grow linearly. This case leads to the number of types
that grows like a power law, Heaps' law (\ref{Heaps})
indeed. Consequently, for this example, we also obtain Hilberg's law
(\ref{Hilberg}).

\subsubsection{Constant periods}
\label{secConstant}

First, let us consider the multiperiodic process where periods equal
to a fixed constant. To be concrete, let $\pi_k=p$ for all
$k\in\mathbb{N}$. Then we obtain a geometric distribution of
$P(K_0=k)=p_k$. In particular, we may bound
\begin{align}
  f_k
  &=\prod_{j=1}^{k-1}\frac{\pi_j-1}{\pi_j}
    =
    \okra{\frac{p-1}{p}}^{k-1},
  \\
  p_k
  &=f_k-f_{k+1}=\frac{1}{\pi_k}\prod_{j=1}^{k-1}\frac{\pi_j-1}{\pi_j}
     =
    \frac{1}{p}\okra{\frac{p-1}{p}}^{k-1},
  \\
  w_k
  &=\sum_{j=2}^{k}\frac{1}{f_j}+\frac{\pi_k}{f_k}
    =
    \sum_{j=1}^{k-1} \okra{\frac{p}{p-1}}^j+p\okra{\frac{p-1}{p}}^{k-1}
    \nonumber\\
  &=
    \frac{\okra{\frac{p}{p-1}}^k-1}{\frac{p}{p-1}-1}-1+
    \okra{\frac{p}{p-1}}^{k-1}p
    =
    2p\okra{\frac{p}{p-1}}^{k-1}-1.
\end{align}
Hence we may evaluate
\begin{align}
  u_t
  &=\min\klam{k:w_k>t}
    =
    \min\klam{k:2p\okra{\frac{p}{p-1}}^{k-1}>t+1}
    \nonumber\\
  &=
    \ceil{\frac{\log(t+1)-\log p-1}{\log\okra{\frac{p}{p-1}}}}+2,
  \\
  g_t
  &=\max\klam{k:p_k^{-1}\le t}
    =
    \max\klam{k:p\okra{\frac{p}{p-1}}^{k-1}\le t}
    \nonumber\\
  &=
    \floor{\frac{\log(t+1)-\log p}{\log\okra{\frac{p}{p-1}}}}+1,
  \\
  \log q_t
  &=\log f_{g_t+1}
    = - g_t\log \okra{\frac{p}{p-1}}
    \le -\log(t+1)+\log p.
\end{align}
Hence bound $u_t-1\le \mean V_t\le g_t+tq_t$ can be rewritten as
\begin{align}
  \ceil{\frac{\log(t+1)-\log p-1}{\log\okra{\frac{p}{p-1}}}}+1
  \le \mean V_t\le 
  \floor{\frac{\log(t+1)-\log p}{\log\okra{\frac{p}{p-1}}}}+1+p.
\end{align}
In this case, the Shannon entropy rate vanishes but we do not have
Hilberg's law (\ref{HilbergZero}) --- by Theorem \ref{theoEntropyBound}.

\subsubsection{Linear periods}
\label{secLinear}

Now, we consider the multiperiodic process where periods grow
approximately in a linear fashion. To be concrete, let
$\pi_k=1+\ceil{ck}$ for a $c>0$ and all $k\in\mathbb{N}$.  We may
upper bound
\begin{align}
  \prod_{j=1}^{k-1}\frac{\pi_j-1}{\pi_j}
  &=
    \prod_{j=1}^{k-1} \okra{1-\frac{1}{1+\ceil{cj}}}
    \le
    \exp\okra{\sum_{j=1}^{k-1}\ln\okra{1-\frac{1}{cj+2}}}
    \nonumber\\
  &\le
    \exp\okra{-\sum_{j=1}^{k-1}\frac{1}{cj+2}}
    \le
    \exp\okra{-\int_1^{k}\frac{dx}{cx+2}}
    \nonumber\\
  &=
    \exp\okra{-\frac{1}{c}\ln\frac{ck+2}{c+2}}=\okra{\frac{c+2}{ck+2}}^{1/c}.
\end{align}
Similarly, we may bound conversely
\begin{align}
  \prod_{j=1}^{k-1}\frac{\pi_j}{\pi_j-1}
  &=
    \prod_{j=1}^{k-1} \okra{1+\frac{1}{\ceil{cj}}}
    \le
    \exp\okra{\sum_{j=1}^{k-1}\ln\okra{1+\frac{1}{cj}}}
    \nonumber\\
  &\le
    \exp\okra{\sum_{j=1}^{k-1}\frac{1}{cj}}
    \le
    \exp\okra{\frac{1}{c}+\int_1^{k}\frac{dx}{cx}}
    \nonumber\\
  &=
    \exp\okra{\frac{1}{c}+\frac{1}{c}\ln k}=(ek)^{1/c}.
\end{align}

In consequence, we obtain Zipf's law (\ref{Zipf}) for
$P(K_0=k)=p_k$. In particular, we may bound
\begin{align}
  f_k
  &=\prod_{j=1}^{k-1}\frac{\pi_j-1}{\pi_j}
  \ge (ek)^{-1/c},
  \\
  f_k
  &=\prod_{j=1}^{k-1}\frac{\pi_j-1}{\pi_j}
  \le \okra{\frac{c+2}{ck+2}}^{1/c}
  \le \okra{\frac{c+2}{c}}^{1/c} k^{-1/c},
  \\
  p_k
  &=\frac{1}{\pi_k}\prod_{j=1}^{k-1}\frac{\pi_j-1}{\pi_j}
    \ge \frac{1}{ck+1}(ek)^{-1/c}\ge \frac{e^{-1/c}}{c+2} k^{-(c+1)/c},
  \\
  p_k
  &=\frac{1}{\pi_k}\prod_{j=1}^{k-1}\frac{\pi_j-1}{\pi_j}
  \le \frac{1}{ck+1}\okra{\frac{c+2}{ck+2}}^{1/c}
    \le \frac{1}{c}\okra{\frac{c+2}{c}}^{1/c} k^{-(c+1)/c}.
\end{align}
Thus we may bound the waiting times
\begin{align}
  w_k
  &=\sum_{j=2}^{k}\frac{1}{f_j}+\frac{\pi_k}{f_k}
  \le
  \sum_{j=2}^{k} (ej)^{1/c}+(ck+2)(ek)^{1/c}
  \nonumber\\
  &\le 
  \int_0^k (ex)^{1/c} dx+(ek)^{1/c}+(ck+2)(ek)^{1/c}
  \nonumber\\
  &=\okra{\frac{ck}{c+1}+ck+3}(ek)^{1/c}
  \le \okra{c+\frac{c}{c+1}+3}e^{1/c}k^{(c+1)/c}.
\end{align}

Hence we may bound further
\begin{align}
  u_t
  &=\min\klam{k:w_k>t}
    \ge
    \okra{\frac{t}{\okra{c+\frac{c}{c+1}+3}e^{1/c}}}^{c/(c+1)},
  \\
  g_t
  &=\max\klam{k:p_k^{-1}\le t}
   \le
   \okra{\frac{t}{c}\okra{\frac{c+2}{c}}^{1/c}}^{c/(c+1)},
  \\
  g_t
  &=\max\klam{k:p_k^{-1}\le t}
    \ge
    \okra{\frac{te^{-1/c}}{c+2}}^{c/(c+1)}
\end{align}
so as to derive
\begin{align}
  q_t&= f_{g_t+1}
       \le \okra{\frac{c+2}{c}}^{1/c}
       \okra{\frac{te^{-1/c}}{c+2}}^{-1/(c+1)}.
\end{align}

As a result, bound $u_t-1\le \mean V_t\le g_t+tq_t$ implies Heaps' law
\begin{align}
  C_1t^{c/(c+1)}\le \mean V_t\le C_2 t^{c/(c+1)}
\end{align}
for certain constants $C_1,C_2>0$. By Theorem \ref{theoEntropyBound},
to derive Hilberg's law (\ref{HilbergZero}) with $\beta=c/(c+1)$, it
suffices to show that $\mean K_t^{1/q}<\infty$ for some $q\ge 1$. In
fact, for any $q>c$, we obtain
\begin{align}
  \mean K_t^{1/q}
  &=\int_0^\infty P(K_t^{1/q}\ge k)dk
    =\int_0^\infty P(K_t\ge k^q) dk
    \nonumber\\
  &=\int_0^\infty f_{\ceil{k^q}} dk
    \le
    \int_1^\infty \okra{\frac{c+2}{c}}^{1/c} k^{-q/c}dk<\infty.
\end{align}

\begin{figure}[t]
  \centering
  \includegraphics[width=0.9\textwidth]{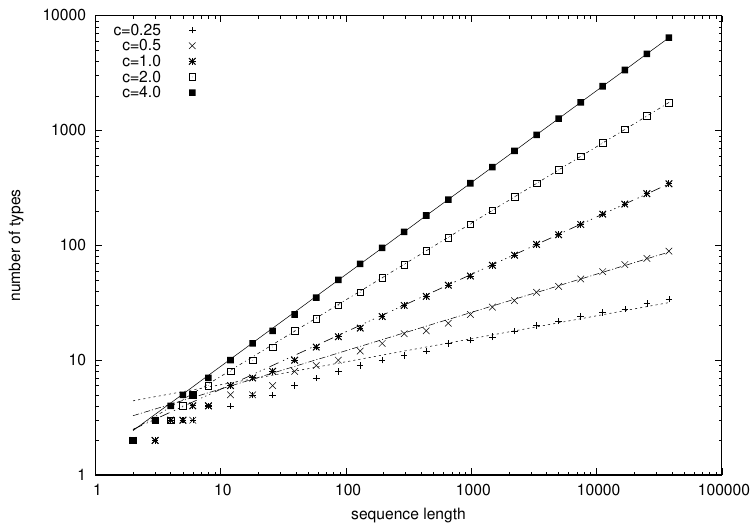}
  \caption{\label{figTypes} The number of types $V_t$ as a function of
    sequence length $t$ for periods $\pi_k=1+\ceil{ck}$ and seeds
    $\sigma_k=1$. The lines are theoretical predictions
    $V_t\sim t^{c/(c+1)}$, where the proportionality constant is
    chosen by least squares.}
\end{figure}

To provide some visualization of these results, in Figure
\ref{figTypes}, we present the growth of the actual number of types
$V_t$ for three multiperiodic sequences with periods
$\pi_k=1+\ceil{ck}$ and seeds $\sigma_k=1$, where
$c\in\klam{0.25,0.5,1.0,2.0,4.0}$. We notice a good agreement with the
theoretical prediction $\mean V_t\sim t^{c/(c+1)}$.

\section{Conclusion}
\label{secConclusion}

We have introduced a new class of stationary stochastic processes,
called multiperiodic processes, generated by random shifts of
deterministic multiperiodic sequences. Under suitable choices of the
period parameters, these processes are stationary and ergodic, though
not mixing, and they satisfy Hilberg's law with vanishing entropy rate
\cite{Hilberg90}.  The central mechanism of the construction is an
infinite collection of independent random seeds that determine the
shifts of the underlying multiperiodic structure. In contrast to the
latent variables of Santa Fe processes
\cite{Debowski09,Debowski12,Hutter21}, these seeds are measurable with
respect to the sample path but not with respect to the shift-invariant
algebra. This distinction makes it possible to combine ergodicity with
power-law block entropy growth.

From a broader perspective, multiperiodic processes complement the
existing collection of examples related to Hilberg's law. Santa Fe and
Oracle processes provide simple constructions with positive entropy
rate, whereas random hierarchical association processes (RHA)
processes \cite{Debowski17,Debowski21} exhibit vanishing entropy rate
but are considerably more complicated. The multiperiodic construction
occupies a different point in this landscape by combining an
elementary description, ergodicity, and zero entropy rate.  Although
multiperiodic processes are unlikely to be useful models of natural
language because of their lack of mixing, they provide analytically
tractable benchmark sources whose entropy rate and Hilberg exponent
can be controlled. Such examples may be useful for testing entropy
estimators, universal coding schemes, and learning algorithms on data
with long-range statistical dependencies.

We have also identified an open question. Namely, it would be
desirable to construct a simple class of processes that simultaneously
exhibits Hilberg's law, vanishing entropy rate, and strong mixing. The
absence of such examples in the current road map, summarized in
Section \ref{secMap}, suggests that this problem may require new ideas.

\section*{Prior versions}

This work stems from rethinking and heavily editing an imperfect
manuscript \cite{Debowski23}, which was posted to ArXiv on February
17, 2023.  That manuscript has not been published anywhere else but
was presented as a seminar talk at the Math Machine Learning seminar
in the Max Planck Institute for Mathematics in the Sciences.

\section*{AI application}

The abstract, the introduction, Section \ref{secContext}, and the
conclusion have been sketched in a dialog between the author and
ChatGPT (\url{chatgpt.com}).  The suggestions of the language model
have been post-edited.

\section*{Funding}

This work received no external funding.

\section*{Acknowledgment}

I thank Jan Mielniczuk, Szymon Jaroszewicz, Marcus Hutter, and Guido
Mont\'ufar for discussing prior versions of this paper.

\bibliographystyle{abbrvnat}

\setlength{\bibsep}{2pt}
\bibliography{0-journals-abbrv,0-publishers-abbrv,ai,ql,nlp,math,mine,tcs,books}

\end{document}